\documentclass[letterpaper, 10 pt, conference]{ieeeconf}  

\IEEEoverridecommandlockouts                              
\usepackage{amsmath} 
\usepackage{amssymb}  
\usepackage{cite}
\usepackage{xcolor}
\usepackage{nccmath}
\usepackage{pifont}
\newcommand{\cmark}{\ding{51}}
\newcommand{\xmark}{\ding{55}}

\title{\LARGE \bf
Small-gain analysis of exponential incremental input/output-to-state stability for large-scale distributed systems
}

\author{Christian Gatke, Julian D. Schiller, Matthias A. Müller
\thanks{This work was supported by the Deutsche Forschungsgemeinschaft (DFG, German Research Foundation) - 426459964}
\thanks{Christian Gatke, Julian D. Schiller, and Matthias A. Müller are with the Leibniz University Hannover,  Institute of Automatic Control,
        30167 Hannover, Germany
        {\tt\small \{gatke, schiller, mueller\}@irt.uni-hannover.de}}%
}

\begin{document}

\maketitle
\thispagestyle{empty}
\pagestyle{empty}

\begin{abstract}
We provide a detectability analysis for nonlinear large-scale distributed systems in the sense of exponential incremental input/output-to-state stability (i-IOSS). In particular, we prove that the overall system is exponentially i-IOSS if each subsystem is i-IOSS, with interconnections treated as external inputs, and a suitable small-gain condition holds. The analysis is extended to a Lyapunov characterization, resulting in a different quantitative outcome regarding the small-gain condition, which is further analyzed within this work. Moreover, we derive linear matrix inequality conditions posed solely on the local subsystems and their interconnections, which guarantee exponential i-IOSS of the overall distributed system. The results are illustrated on a numerical example.
\end{abstract}

\section{INTRODUCTION}
Driven by advances in technologies such as 5G, the Internet of Things, and cloud computing, large-scale distributed systems are becoming increasingly prevalent in modern industry; examples include complex chemical processes, multi-robot systems, and transportation networks. Such systems usually consist of many (potentially small) subsystems, which are connected to each other either physically, by constraints, or they share the same objective as it is common for multi-agent systems (see \cite{mas} for a survey). In order to perform a proper state estimation for these kind of systems, a sophisticated analysis regarding detectability is needed. In particular, incremental input/output-to-state stability (i-IOSS) has become a standard detectability condition for nonlinear systems \cite{sontag_97}, \cite{allan_21}, \cite{schiller_23_iv}, \cite{knüfer_20}, which characterizes the bounded difference between any two trajectories of the system states with respect to their initial conditions, disturbances, and output measurements. In this sense, i-IOSS represents a detectability property, since diverging states induce observable differences in the output. In recent years, i-IOSS has received considerable research attention. For instance, \cite{allan_21} provides a converse theorem establishing the existence of an i-IOSS Lyapunov function for such systems. A time-discounted version of i-IOSS is proposed in \cite{knüfer_20}, where the influence of past disturbances and measurements decays, thereby emphasizing more recent information. Especially in the context of Moving Horizon Estimation (MHE), i-IOSS appears frequently as detectability condition \cite{arezki_23}, \cite{yang_25}, \cite{schiller_23}, \cite{knüfer_18}, \cite{knüfer_23}. In \cite{schiller_23}, e.g., a robust stability analysis for centralized MHE is provided using a Lyapunov function. Moreover, \cite{schiller_23}, \cite{arezki_lmi_23} propose linear matrix inequalities (LMIs) to verify exponential \mbox{i-IOSS} for a given system. However, these methods suffer from the curse of dimensionality, becoming intractable for large complex systems. Furthermore, the verification gets even more difficult if the network structure and/or the number of subsystems involved changes dynamically. \\
Instead of analyzing the overall system as a whole, a decentralized approach may be more reasonable, in which the subsystems are treated individually while accounting for their interconnections. To make conclusions about the overall system with these kind of approaches, small-gain concepts are often utilized \cite{jiang_94}, \cite{teel_96}, \cite{iasson_11}. The idea is to restrict the coupling gains with an appropriate condition to ensure that they remain sufficiently small such that the interconnections between the subsystems do not amplify each other. In this context, numerous contributions have appeared in the field of input-to-state stability (ISS), using small-gain theorems to establish stability properties of interconnected systems \cite{dashkovskiy_07}, \cite{dashkovskiy_11}, \cite{mironchenko_24}. In \cite{jiang_94}, a concept of input-to-output practical stability is introduced, and a small-gain condition is provided to guarantee that this property is preserved under the interconnection of two systems. As for the interconnection of two exponential ISS systems, the authors in \cite{guiver_23} take advantage of a small-gain condition to prove that this property remains valid under interconnection. A generalized version of the nonlinear small-gain theorem is proposed in \cite{dashkovskiy_07}, where more than two coupled ISS subsystems are considered. Moreover, in \cite{dashkovskiy_11}, ISS networks in a dissipative form are analyzed. As for incremental ISS, a small-gain approach is utilized in \cite{eijnden_23} to show that the feedback interconnection of a hybrid integrator and a linear time-invariant system is incrementally ISS. However, to the best of the authors' knowledge, there are no existing results in the context of \mbox{i-IOSS} regarding the small-gain theory. \\
In this work, we provide a distributed detectability analysis for large-scale nonlinear systems in the sense of exponential i-IOSS. In doing so, we impose an exponential i-IOSS assumption on the subsystems by taking the couplings into account and exploit a small-gain condition on the interconnections to prove that the overall system is exponentially \mbox{i-IOSS}. This is done for exponential i-IOSS with the classical trajectory-based formulation and in Lyapunov coordinates, where the latter indicates a less conservative small-gain condition. In addition, we extend the LMI conditions in \cite{schiller_23} to verify the exponential i-IOSS property for each subsystem in order to conclude detectability for the overall system. Moreover, with a finite set of different subsystem dynamics, the verification can be performed for networks with arbitrarily large number of subsystems. \\
The setup we consider in this work is introduced in Section \ref{sec:setup}. The analysis of exponential i-IOSS for distributed systems is provided in Section \ref{sec:IOSS} and the Lyapunov characterization is given in Section \ref{sec:lyap}. Section \ref{sec:verif} provides the verification with a comparison between the exponential \mbox{i-IOSS} property from Section \ref{sec:IOSS} and the Lyapunov characterization in Section \ref{sec:lyap}, which is illustrated on an example in Section \ref{sec:example}. Section \ref{sec:conclusion} concludes this work.

\subsection{Notation and preliminaries}
\label{sec:notation}
We denote positive integers greater than or equal to $a$ by $\mathbb{I}_{\geq a}$, while $\mathbb{I}_{[a,b]}$ stands for all positive integers in the interval $[a,b]$. The identity matrix with dimension $M$ is represented by $I_M$ and a diagonal matrix with scalar entries $\lambda_i$ is denoted by $\mathrm{diag}(\lambda_1, \cdot \cdot \cdot, \lambda_M)$. In case $\lambda_i$ are matrices, the expression indicates a block-diagonal matrix. Positive (semi)definiteness of a matrix P is denoted by $P \succ 0$ $(P \succeq 0)$, while negative (semi)definiteness is expressed by $P \prec 0$ $(P \preceq 0)$. For a vector $x \in \mathbb{R}^n$ and a matrix $P \in \mathbb{R}^{n \times n}$, $\|x\|$ and $\|P\|$ denote the Euclidean and induced 2-Norms, respectively, while $\|x\|_P^2 = x^T P x$ for $P \succ 0$. Moreover, $[x^{(i)}]_{i=1}^M$ represents a stacked vector from $1$ to $M$, i.e., $\begin{pmatrix} {x^{(1)}}^T \cdot \cdot \cdot {x^{(M)}}^T \end{pmatrix}^T$. The spectral radius of a real square matrix $G$ is given by $\rho(G)$, while $\lambda_{\min}(A)$ and $\lambda_{\max}(A)$ denote the minimum and maximum eigenvalue for a real symmetric and square matrix $A$, respectively.

\section{SETUP}
\label{sec:setup}

We consider a network of $M$ nonlinear discrete-time coupled subsystems
\vspace{-0.1em}
\begin{subequations}
	\begin{equation}
		x_{t+1}^{(i)} = f_i(x_t^{(i)}, u_t^{(i)}, w_t^{(i)}, z_t^{(i)}), \; i \in \mathbb{I}_{[1,M]},
		\label{eq:sub_dyn}
	\end{equation}
	\begin{equation}
		y_t^{(i)} = h_i(x_t^{(i)}, u_t^{(i)}, w_t^{(i)}, z_t^{(i)}), \; i \in \mathbb{I}_{[1,M]},
		\label{eq:sub_out}
	\end{equation}
	\label{eq:sub}
\end{subequations}
\vspace{-0.1em}
with the subsystem state $x_t^{(i)} \in \mathbb{R}^{n_i}$, the known control input $u_t^{(i)} \in \mathbb{R}^{m_i}$  and the disturbance $w_t^{(i)} \in \mathbb{R}^{q_i}$. Note that the process disturbance and measurement noise is, without loss of generality, summarized in one variable $w_t^{(i)}$. In~(\ref{eq:sub}), $z_t^{(i)} \in \mathbb{R}^{s_i}$ contains the system states $x_t^{(j)}$ with all ${j \in \mathcal{N}_i \subseteq \mathbb{I}_{[1,M]} \setminus \{i\}}$, where $\mathcal{N}_i$ denotes the neighbor set, containing all indices to which subsystem $i$ is coupled. The noisy output of each subsystem $i$ is indicated by ${y_t^{(i)} \in \mathbb{R}^{p_i}}$. Stacking the system states $x_t = [x_t^{(i)}]_{i=1}^M$ and doing the same for the control input $u_t = [u_t^{(i)}]_{i=1}^M$, disturbance ${w_t = [w_t^{(i)}]_{i=1}^M}$ and output $y_t = [y_t^{(i)}]_{i=1}^M$, we obtain
\vspace{-0.1em}
\begin{subequations}
	\begin{equation}
		x_{t+1} = f_{\Sigma}(x_t, u_t, w_t),
	\end{equation}
	\begin{equation}
		y_t = h_{\Sigma}(x_t, u_t, w_t)
	\end{equation}
	\label{eq:sys}
\end{subequations}
\vspace{-0.1em}
for the overall system, where $f_{\Sigma}$ and $h_{\Sigma}$ denote the stacked nonlinear dynamics $f_i$ and output functions $h_i$, respectively. \\

Next, we define the exponential i-IOSS property for nonlinear detectability.
\newtheorem{definition}{Definition}
\begin{definition}[exponential i-IOSS]
	System (\ref{eq:sys}) is said to be exponentially i-IOSS if there exist $\eta_\Sigma \in (0,1)$ and $p_\Sigma$, $q_\Sigma$, $r_\Sigma > 0$  such that
	\begin{align}
		\| x_{t} & - \tilde{x}_{t} \| \leq \eta_\Sigma^t p_\Sigma \| x_0 - \tilde{x}_0 \| \nonumber \\
		& + \sum_{k = 0}^{t-1} \eta_\Sigma^{t-k-1} \bigg( q_\Sigma \| w_{k} - \tilde{w}_{k} \| + r_\Sigma \| y_{k} - \tilde{y}_{k} \| \bigg)		
	\end{align}
	holds for all times $t \in \mathbb{I}_{\geq 0}$, all initial conditions ${x_0, \; \tilde{x}_0 \in \mathbb{R}^{n}}$, and all disturbances $w_t, \;\tilde{w}_t \in \mathbb{R}^q$, where $x_{t+1} = f_{\Sigma}(x_t, u_t, w_t)$, $\tilde{x}_{t+1} = f_{\Sigma}(\tilde{x}_t, u_t, \tilde{w}_t)$, ${y_t = h_{\Sigma}(x_t, u_t, w_t)}$, $\tilde{y}_t = h_{\Sigma}(\tilde{y}_t, u_t, \tilde{w}_t)$, $t \in \mathbb{I}_{\geq 0}$.
\label{def:IOSS}
\end{definition}

In \cite{allan_21}, a converse theorem is provided, establishing the equivalence between i-IOSS and the existence of an i-IOSS Lyapunov function. We proceed with the definition of the exponential case.

\begin{definition}[exponential i-IOSS Lyapunov function]
	System (\ref{eq:sys}) admits an exponential i-IOSS Lyapunov function $V_{\Sigma} : \mathbb{R}^n \times \mathbb{R}^n \rightarrow \mathbb{R}_{\geq 0}$ if there exist $P_{\Sigma 1}, \; P_{\Sigma 2} \succ 0$, $Q, \; R \succeq 0$ and $\lambda_\Sigma \in (0,1)$ such that
	\begin{subequations}
		\begin{equation}
			\|x_t - \tilde{x}_t\|_{P_{\Sigma 1}}^2 \leq V_\Sigma(x_t,\tilde{x}_t) \leq \|x_t - \tilde{x}_t\|_{P_{\Sigma 2}}^2,
		\end{equation}
		\begin{align}
			V_\Sigma(x_{t+1},\tilde{x}_{t+1}) & -  V_\Sigma(x_t,\tilde{x}_t) \leq -\lambda_\Sigma V_\Sigma(x_t,\tilde{x}_t) \nonumber \\
			& + \|w_{t} - \tilde{w}_{t}\|_{Q_\Sigma}^2 + \|y_{t} - \tilde{y}_{t}\|_{R_\Sigma}^2
			\label{eq:lyap_def}
		\end{align}
	\end{subequations}
	holds for all times $t \in \mathbb{I}_{\geq 0}$, all initial conditions ${x_0, \; \tilde{x}_0 \in \mathbb{R}^{n}}$, and all disturbances $w_t, \;\tilde{w}_t \in \mathbb{R}^q$, where $x_{t+1} = f_{\Sigma}(x_t, u_t, w_t)$, $\tilde{x}_{t+1} = f_{\Sigma}(\tilde{x}_t, u_t, \tilde{w}_t)$, ${y_t = h_{\Sigma}(x_t, u_t, w_t)}$, $\tilde{y}_t = h_{\Sigma}(\tilde{y}_t, u_t, \tilde{w}_t)$, $t \in \mathbb{I}_{\geq 0}$.
	\label{def:lyap}
\end{definition}


\section{NONLINEAR DETECTABILITY OF THE OVERALL SYSTEM}

\subsection{Exponential i-IOSS property}
\label{sec:IOSS}

In this section, we prove exponential i-IOSS of the overall system (\ref{eq:sys}) by considering only the subsystems with their interconnections and impose a small-gain condition to be fulfilled. To this end, we require that each individual subsystem is i-IOSS according to the following assumption.

\newtheorem{assumption}{Assumption}
\begin{assumption}
	There exist $\eta_i \in (0,1)$, $p_i, q_i, r_i, g_{ij} > 0$ such that each subsystem (\ref{eq:sub}) satisfies
	\begin{align}
		\| x_{t}^{(i)} - & \tilde{x}_{t}^{(i)} \| \leq \eta_i^t p_i \| x_0^{(i)} - \tilde{x}_0^{(i)} \| \nonumber \\
		& + \sum_{k=0}^{t-1} \eta_i^{t-k-1} \bigg( q_i \| w_k^{(i)} - \tilde{w}_k^{(i)} \| \nonumber \\
		& + r_i \| y_k^{(i)} - \tilde{y}_k^{(i)} \| + \sum_{j \in \mathcal{N}_i}^{} g_{ij} \|x_k^{(j)}-\tilde{x}_k^{(j)}\| \bigg)
		\label{eq:IOSS}
	\end{align}
		for all times $t \in \mathbb{I}_{\geq 0}$, all initial conditions $x_0^{(i)}, \; \tilde{x}_0^{(i)} \in \mathbb{R}^{n_i}$, all disturbances $w^{(i)}, \; \tilde{w}^{(i)} \in \mathbb{R}^{q_i}$, and all ${x^{(j)}, \; \tilde{x}^{(j)} \in \mathbb{R}^{n_j}}$, where $x_{t+1}^{(i)} = f_i(x_t^{(i)}, u_t^{(i)}, w_t^{(i)}, z_t^{(i)})$, $\tilde{x}_{t+1}^{(i)} = f_i(\tilde{x}_t^{(i)}, u_t^{(i)}, \tilde{w}_t^{(i)}, \tilde{z}_t^{(i)})$, $y_t^{(i)} = h_i(x_t^{(i)}, u_t^{(i)}, w_t^{(i)}, z_t^{(i)})$, $\tilde{y}_t^{(i)} = h_i(\tilde{y}_t^{(i)}, u_t^{(i)}, \tilde{w}_t^{(i)}, \tilde{z}_t^{(i)})$, $t \in \mathbb{I}_{\geq 0}$.
	\label{ass:IOSS}
\end{assumption}

In (\ref{eq:IOSS}), we treat $x^{(j)}$ and $\tilde{x}^{(j)}$ as additional independent input for subsystem $i$. Before stating Theorem \ref{theo:IOSS}, we introduce the gain matrix
\nopagebreak
\begin{equation}
	G := [(1-\eta_i)^{-1} g_{ij}]_{i,j=1, \cdot \cdot \cdot, M},
	\label{eq:G}
\end{equation}

which essentially encodes the mutual dependencies of the individual systems in the network. Note that $G$ is a square matrix whose diagonal entries are all zero.

\newtheorem{theorem}{Theorem}
\begin{theorem}[exponential i-IOSS of the overall system]
	Let Assumption \ref{ass:IOSS} hold. Then, the overall system (\ref{eq:sys}) is exponentially i-IOSS according to Definition \ref{def:IOSS} if the small-gain condition $\rho(G) < 1$ is satisfied.
	\label{theo:IOSS}
\end{theorem}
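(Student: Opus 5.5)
Our plan is to work with discounted ``sup-type'' norms of the error signals, which turn the convolution sums in (\ref{eq:IOSS}) into scalar bounds to which the small-gain condition $\rho(G)<1$ applies.

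\textbf{Step 1: choose the decay rate.} Denote $e_k^{(i)} := \|x_k^{(i)}-\tilde{x}_k^{(i)}\|$, and let $d_k^{(i)} := q_i\|w_k^{(i)}-\tilde{w}_k^{(i)}\| + r_i\|y_k^{(i)}-\tilde{y}_k^{(i)}\|$. Since $\rho(G)<1$ is a strict inequality and the spectral radius depends continuously on the matrix entries, we can pick a common $\eta\in(\max_i \eta_i,1)$ close enough to $1$ that the perturbed matrix
\[
G_\eta := [(\eta-\eta_i)^{-1}\eta\, g_{ij}]_{i,j}
\]
still satisfies $\rho(G_\eta)<1$. As $\eta\to 1$, $G_\eta\to G$ entrywise.

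\textbf{Step 2: discounted bounds for each subsystem.} Multiply (\ref{eq:IOSS}) by $\eta^{-t}$ and fix a horizon $T$. For $t\le T$, define
\[
E_T^{(i)} := \max_{0\le t\le T} \eta^{-t}e_t^{(i)}, \qquad D_T^{(i)} := \max_{0\le k< T}\eta^{-k}d_k^{(i)}.
\]
In the coupling term, bound $e_k^{(j)}\le \eta^{k}E_T^{(j)}$. Then
\[
\eta^{-t}\sum_{k<t}\eta_i^{t-k-1}\eta^{k} = \eta^{-1}\sum_{m=0}^{t-1}(\eta_i/\eta)^{m} \le (\eta-\eta_i)^{-1}.
\]
Up to constant factors, this yields the vector inequality
\[
E_T \le p\circ e_0 + c\,D_T + G_\eta E_T
\]
componentwise, with $E_T = [E_T^{(i)}]_{i=1}^M$ and $c$ a constant. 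Because $G_\eta\ge 0$ and $\rho(G_\eta)<1$, the Neumann series shows that $(I-G_\eta)^{-1}=\sum_k G_\eta^k$ exists and is entrywise nonnegative. Since $E_T$ is finite for finite $T$, we may rearrange to get
\[
E_T\le (I-G_\eta)^{-1}\bigl(p\circ e_0 + cD_T\bigr).
\]

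\textbf{Step 3: return to the exponential i-IOSS form.} A max-type bound on $D_T$ is not yet of the summation form required by Definition~\ref{def:IOSS}. To fix this, we replace $D_T$ by the sum $\sum_{k<T}\eta^{-k}d_k$, which dominates the max. Multiplying back by $\eta^{T}$ and taking $t=T$ gives
\[
e_t^{(i)}\le \eta^t C_1\|x_0-\tilde{x}_0\| + C_2\sum_{k<t}\eta^{t-k}\bigl(\|w_k-\tilde{w}_k\|+\|y_k-\tilde{y}_k\|\bigr).
\]
Here we use $\|w_k^{(i)}-\tilde{w}_k^{(i)}\|\le\|w_k-\tilde{w}_k\|$, and likewise for $y$. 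Summing over $i$ and using $\|x\|\le\sum_i\|x^{(i)}\|$ yields Definition~\ref{def:IOSS} with $\eta_\Sigma=\eta$, after absorbing one factor of $\eta$ into the constants $q_\Sigma,r_\Sigma$.

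\textbf{Main obstacle.} The delicate point is Step 2's compatibility between the max-normalization of the coupling term and the summation form of the input terms. The estimate must be kept legitimately componentwise and nonnegative so that $(I-G_\eta)^{-1}\ge0$ can be applied. Bounding the input contribution by $\eta^{-t}\sum_k\eta_i^{t-k-1}d_k\le \eta^{-1}\sum_{k<t}\eta^{-k}d_k$ handles this directly. That quantity is nondecreasing in $t$, so the maximum over $t\le T$ is attained at $T$, which keeps the argument clean.
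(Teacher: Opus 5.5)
Your proof is correct, but it takes a genuinely different route from the paper. The paper works with undiscounted suprema. It stacks the max-form estimate (\ref{eq:tau}) and inverts $I_M-G$ to get the uniform bound (\ref{eq:sup_bound}). It then follows \cite{guiver_23} with a windowed contraction (\ref{eq:k_comp}) driven by $S=\mathrm{diag}(\eta_1^Np_1,\dots,\eta_M^Np_M)+G$, choosing $N$ so large that $\rho(S)<1$. This gives an asymptotic-gain estimate in max-norms, (\ref{eq:final_bound}), which is finally converted into the discounted-sum form of Definition~\ref{def:IOSS} by appeal to \cite[Prop.~2.5]{allan_21}.

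Your exponentially weighted norms on a finite horizon buy several things:
\begin{itemize}
\item The sum form falls out directly, with no need for the max-to-sum equivalence. The paper only asserts that this equivalence carries over to the exponential case.
\item Working with $E_T$ on $[0,T]$, which is automatically finite, makes the rearrangement through $(I-G_\eta)^{-1}\ge 0$ legitimate. The paper's step from (\ref{eq:stacked_sup}) to (\ref{eq:sup_bound}) tacitly requires the infinite-horizon suprema to be finite.
\item You obtain an explicit admissible rate: any $\eta_\Sigma\in(\max_i\eta_i,1)$ for which the weighted gain matrix has spectral radius below one. In the paper the rate is hidden in $\rho(S)$ and the choice of $N$.
\end{itemize}
The paper's route, in turn, produces the intermediate max-norm bound, which is of interest in its own right. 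It also stays close to the established small-gain arguments for ISS.

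One slip needs repair. Your own geometric-series computation gives the coupling coefficient $(\eta-\eta_i)^{-1}g_{ij}$. You defined $G_\eta$ with entries $\eta(\eta-\eta_i)^{-1}g_{ij}$, which are strictly smaller. So $\rho(G_\eta)<1$ as stated does not control the inequality you actually derive. The coupling constants are exactly what must not be hidden in ``up to constant factors''. Simply set $G_\eta:=[(\eta-\eta_i)^{-1}g_{ij}]_{i,j}$. Its entries decrease to those of $G$ as $\eta\uparrow 1$, so continuity (or monotonicity) of the spectral radius still lets you pick $\eta<1$ with $\rho(G_\eta)<1$. The rest of your argument is unchanged.
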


The proof of Theorem \ref{theo:IOSS} follows the approach from \cite{guiver_23}, where ISS of a system consisting of two coupled subsystems is derived utilizing a small-gain condition, and adapts it to the case of i-IOSS, i.e., the incremental version including outputs with an arbitrary number of subsystems $M$.

\begin{proof}
	Define $\Delta x := x-\tilde{x}$, $\Delta w := w-\tilde{w}$, and ${\Delta y := y-\tilde{y}}$. Now, taking the maximum of $\| \Delta w_k^{(i)} \|$, $\| \Delta y_k^{(i)} \|$, and $\| \Delta x^{(j)} \|$ over $k \in \mathbb{I}_{[0,t]}$ and utilizing the geometric series in (\ref{eq:IOSS}), we obtain
	\begin{align}
		\| \Delta & x_{t}^{(i)} \| \leq \eta_i^t p_i \| \Delta x_0^{(i)} \| + \frac{1}{1-\eta_i} \bigg( q_i \max\limits_{k \in \mathbb{I}_{[0,t]}} \| \Delta w_k^{(i)} \| \nonumber \\
		& + r_i \max\limits_{k \in \mathbb{I}_{[0,t]}} \| \Delta y_k^{(i)} \|  + \sum_{j \in \mathcal{N}_i}^{} g_{ij} \max\limits_{k \in \mathbb{I}_{[0,t]}} \| \Delta x_k^{(j)} \| \bigg).
		\label{eq:tau}
	\end{align}
	By defining $\tilde{q}_i := q_i(1-\eta_i)^{-1}$ and $\tilde{r}_i := r_i(1-\eta_i)^{-1}$, taking the supremum over $[0, \infty)$ and stacking the inequalities from index $1$ to $M$, we arrive at
	\begin{align}
		\bigg[ \sup\limits_{k \in \mathbb{I}_{\geq 0}} & \| \Delta x_k^{(i)} \| \bigg]_{i=1}^M \leq \bigg[ p_i \| \Delta x_0^{(i)} \| \bigg]_{i=1}^M \nonumber \\
		& + \bigg[ \tilde{q}_i \sup\limits_{k \in \mathbb{I}_{\geq 0}} \| \Delta w_k^{(i)} \| \bigg]_{i=1}^M + \bigg[ \tilde{r}_i \sup\limits_{k \in \mathbb{I}_{\geq 0}} \| \Delta y_k^{(i)} \| \bigg]_{i=1}^M \nonumber \\
		& + G \bigg[ \sup\limits_{k \in \mathbb{I}_{\geq 0}} \| \Delta x_k^{(i)} \| \bigg]_{i=1}^M
		\label{eq:stacked_sup}
	\end{align}
	where $G$ is defined as in (\ref{eq:G}). Note that $g_{ij} = 0$ if there is no coupling between subsystems $i$ and $j$. Now, assuming that the small-gain condition $\rho(G) < 1$ is satisfied, we can write
	\begin{align}
		\bigg[ \sup\limits_{k \in \mathbb{I}_{\geq 0}} \| \Delta x_k^{(i)} \| \bigg]_{i=1}^M \leq (I_M - G)^{-1} \bigg( \bigg[ p_i \| \Delta x_0^{(i)} \| \bigg]_{i=1}^M \nonumber \\
		+ \bigg[ \tilde{q}_i \sup\limits_{k \in \mathbb{I}_{\geq 0}} \| \Delta w_k^{(i)} \| \bigg]_{i=1}^M + \bigg[ \tilde{r}_i \sup\limits_{k \in \mathbb{I}_{\geq 0}} \| \Delta y_k^{(i)} \| \bigg]_{i=1}^M \bigg).
		\label{eq:sup_bound}
	\end{align}
	 Next, we consider the time interval $[\xi N + l, (\xi+1) N + l]$ with $N, \; l, \; \xi \in \mathbb{I}_{\geq0}$. From (\ref{eq:tau}), it follows that
	\begin{align}
		\| \Delta & x_{(\xi+1) N + l}^{(i)} \| \leq \eta_i^N p_i \| \Delta x_{\xi N + l}^{(i)} \| \nonumber \\
		& + \tilde{q}_i \max\limits_{k \in \mathbb{I}_{[\xi N + l, (\xi+1) N + l]}} \| \Delta w_k^{(i)} \| \nonumber \\
		& + \tilde{r}_i \max\limits_{k \in \mathbb{I}_{[\xi N + l, (\xi+1) N + l]}} \| \Delta y_k^{(i)} \| \nonumber \\
		& + \sum_{j \in \mathcal{N}_i}^{} \frac{g_{ij}}{1-\eta_i} \max\limits_{k \in \mathbb{I}_{[\xi N + l, (\xi+1) N + l]}} \| \Delta x_k^{(j)} \|.		
	\end{align}
	Maximizing over $l \in \mathbb{I}_{\geq 0}$, we obtain
	\begin{align}
		\sup\limits_{k \in \mathbb{I}_{\geq (\xi+1) N}} \| \Delta x_{k}^{(i)} \| & \leq \eta_i^N p_i \sup\limits_{k \in \mathbb{I}_{\geq \xi N}} \| \Delta x_{k}^{(i)} \| \nonumber \\
		& + \tilde{q}_i  \sup\limits_{k \in \mathbb{I}_{\geq 0}} \| \Delta w_k^{(i)} \| + \tilde{r}_i \sup\limits_{k \in \mathbb{I}_{\geq 0}} \| \Delta y_k^{(i)} \| \nonumber \\
		& + \sum_{j \in \mathcal{N}_i}^{} \frac{g_{ij}}{1-\eta_i}  \sup\limits_{k \in \mathbb{I}_{\geq \xi N}} \| \Delta x_k^{(j)} \|.
		\label{eq:max_l}
	\end{align}
	Recall that (\ref{eq:max_l}) holds for each subsystem $i \in \mathbb{I}_{[1,M]}$. Hence, the stacked subsystems satisfy
	\begin{align}
		\bigg[ \sup\limits_{k \in \mathbb{I}_{\geq (\xi+1) N}} \| \Delta x_{k}^{(i)} \| \bigg]_{i=1}^M & \leq S \bigg[ \sup\limits_{k \in \mathbb{I}_{\geq \xi N}} \| \Delta x_{k}^{(i)} \| \bigg]_{i=1}^M \nonumber \\
		+ \bigg[ \tilde{q}_i  \sup\limits_{k \in \mathbb{I}_{\geq 0}} \| \Delta w_k^{(i)} \| \bigg]_{i=1}^M & + \bigg[ \tilde{r}_i \sup\limits_{k \in \mathbb{I}_{\geq 0}} \| \Delta y_k^{(i)} \| \bigg]_{i=1}^M.
		\label{eq:k_comp}
	\end{align}
	with $S := I_M [\eta_i^N p_i]_{i=1}^M + G$. Note that the matrix G is the same as in (\ref{eq:stacked_sup}). Applying (\ref{eq:k_comp}) $\xi$ times, we arrive at
	\begin{align}
		\bigg[ \sup\limits_{k \in \mathbb{I}_{\geq \xi N}} \| \Delta x_{k}^{(i)} \| \bigg]_{i=1}^M & \leq S^\xi \bigg[ \sup\limits_{k \in \mathbb{I}_{\geq 0}} \| \Delta x_{k}^{(i)} \| \bigg]_{i=1}^M \nonumber \\
		& + \sum_{j=0}^{\xi-1} S^j \bigg( \bigg[ \tilde{q}_i  \sup\limits_{k \in \mathbb{I}_{\geq 0}} \| \Delta w_k^{(i)} \| \bigg]_{i=1}^M \nonumber \\
		& + \bigg[ \tilde{r}_i \sup\limits_{k \in \mathbb{I}_{\geq 0}} \| \Delta y_k^{(i)} \| \bigg]_{i=1}^M \bigg).
		\label{eq:main_bound}
	\end{align}
	Now, inserting (\ref{eq:sup_bound}) in (\ref{eq:main_bound}), we obtain
	\begin{align}
		\bigg[ & \sup\limits_{k \in \mathbb{I}_{\geq \xi N}} \| \Delta x_{k}^{(i)} \| \bigg]_{i=1}^M \leq S^\xi (I_M - G)^{-1} \bigg[ p_i \| \Delta x_0^{(i)} \| \bigg]_{i=1}^M \nonumber \\
		& + \bigg( S^\xi (I_M - G)^{-1} + \sum_{j=0}^{\xi-1} S^j \bigg) \bigg( \bigg[ \tilde{q}_i  \sup\limits_{k \in \mathbb{I}_{\geq 0}} \| \Delta w_k^{(i)} \| \bigg]_{i=1}^M \nonumber \\
		& + \bigg[ \tilde{r}_i \sup\limits_{k \in \mathbb{I}_{\geq 0}} \| \Delta y_k^{(i)} \| \bigg]_{i=1}^M \bigg).
		\label{eq:main_bound_insert}
	\end{align}
	Due to the fact that $\eta_i \in (0,1)$ and $G$ satisfies the small-gain condition (i.e., $\rho(G) < 1$), there exists $N \in \mathbb{I}_{\geq 0}$ large enough such that $\rho(S) < 1$. Hence, there exist $b > 0$ and $\sigma_0 \in (0,1)$ such that $\|S^\xi\| \leq b \sigma_0^\xi$ for all $\xi \in \mathbb{I}_{\geq 0}$. Moreover, we choose $\sigma \in (0,1)$ such that $\sigma_0^\xi \leq \sigma^{(\xi+1)N} \leq \sigma^{\xi N + l_0}$ holds with $l_0 \in \mathbb{I}_{[0,N]}$. Taking now the norm of (\ref{eq:main_bound_insert}), we arrive at 
	\begin{align}
		& \bigg\| \bigg[ \sup\limits_{k \in \mathbb{I}_{\geq \xi N}} \| \Delta x_{k}^{(i)} \| \bigg]_{i=1}^M \bigg\| \leq \bar{g} b \sigma^{\xi N + l_0} \max\limits_{i \in \mathbb{I}_{[1,M]}} p_i \| \Delta x_0 \| \nonumber \\
		& + \bar{b} \bigg( \bigg\| \bigg[ \tilde{q}_i  \sup\limits_{k \in \mathbb{I}_{\geq 0}} \| \Delta w_k^{(i)} \| \bigg]_{i=1}^M \bigg\| {+} \bigg\| \bigg[ \tilde{r}_i \sup\limits_{k \in \mathbb{I}_{\geq 0}} \| \Delta y_k^{(i)} \| \bigg]_{i=1}^M \bigg\| \bigg)
		\label{eq:main_bound_norm}
	\end{align}
	with $\bar{g} = \|(I_M-G)^{-1}\|$ and $\bar{b} = b(\bar{g} + (1-\sigma_0)^{-1})$. In the following, we make use of the general relation
	\begin{equation}
		\begin{split}
			\sup\limits_{k} \|\Delta x_k\| \leq \bigg\| \bigg[ \sup\limits_{k} \|\Delta x_k^{(i)}\| \bigg]_{i=1}^M \bigg\| \leq	\sqrt{M} \sup\limits_{k} \|\Delta x_k\|.
		\end{split}
		\label{eq:gen_rel}
	\end{equation}
	By using the first and second bound in (\ref{eq:gen_rel}), defining ${h := \bar{g} b \max\nolimits_{i} p_i}$, and considering the time $t = \xi N + l_0$, we can derive
	\begin{align}
		&\|\Delta x_{t}\| \leq \sup\limits_{k \in \mathbb{I}_{\geq \xi N}} \| \Delta x_k \| \leq \bigg\| \bigg[ \sup\limits_{k \in \mathbb{I}_{\geq \xi N}} \| \Delta x_{k}^{(i)} \| \bigg]_{i=1}^M \bigg\| \nonumber \\
		& \stackrel{(\ref{eq:main_bound_norm})}{\leq} h \sigma^t \| \Delta x_0 \| \nonumber \\
		& + \bar{b} \sqrt{M}  \bigg( \tilde{q}_i  \max\limits_{k \in \mathbb{I}_{[0,t-1]}} \| \Delta w_k \|  + \tilde{r}_i \max\limits_{k \in \mathbb{I}_{[0,t-1]}} \| \Delta y_k \| \bigg).
		\label{eq:final_bound}
	\end{align}
	Note that due to causality, the maximum of $\|\Delta w \|$ and $\|\Delta y \|$ is taken up to $t-1$. Moreover, (\ref{eq:final_bound}) corresponds to exponential i-IOSS of the overall system in classical asymptotic gain formulation using sum of max-norms. For asymptotic (i.e., not necessarily exponential) i-IOSS, it is shown in \mbox{\cite[Prop. 2.5]{allan_21}} that this is equivalent to i-IOSS in time-discounted sum formulation. This can be straightforwardly specialized to the exponential case, i.e., Definition \ref{def:IOSS}, which completes the proof.
\end{proof}

\newtheorem{remark}{Remark}
\begin{remark}
	The bound on the disturbance and output term in (\ref{eq:final_bound}) depends on the number of subsystems $M$. This dependency arises, because we impose the explicit structure in terms of the max-norm of the stacked disturbances $w$ and outputs $y$, instead of the norm of the stacked max-norms of the individual disturbances $w^{(i)}$ and outputs $y^{(i)}$. Note that a tighter bound on $\|\Delta x_t\|$ with gains independent of $M$ can be derived analogous to (\ref{eq:final_bound}), but by directly using the right-hand side of (\ref{eq:main_bound_norm}) and not the second inequality of (\ref{eq:gen_rel}).
\end{remark}

\subsection{Lyapunov characterization}
\label{sec:lyap}

Next, we show that under a small-gain condition on the couplings between the subsystems, an exponential \mbox{i-IOSS} Lyapunov function for the overall system (\ref{eq:sys}) can be constructed from such Lyapunov functions for the individual subsystems.

\begin{assumption}
	Each subsystem (\ref{eq:sub}) admits a local exponential i-IOSS Lyapunov function $V^{(i)}(x_t^{(i)},\tilde{x}_t^{(i)}) : {\mathbb{R}^{n_i} \times \mathbb{R}^{n_i} \rightarrow \mathbb{R}_{\geq 0}}$ with $P_1^{(i)}, \; P_2^{(i)} \succ 0$, $\lambda_i \in (0,1)$, $\gamma_{ij} > 0$ and $Q_i,R_i \succeq 0$ such that
	\begin{subequations}
		\begin{align}
			\|x_t^{(i)} - \tilde{x}_t^{(i)}\|_{P_1^{(i)}}^2 \leq V^{(i)}(x_t^{(i)},\tilde{x}_t^{(i)}) \leq \|x_t^{(i)} - \tilde{x}_t^{(i)}\|_{P_2^{(i)}}^2,
		\end{align}
		\begin{align}
			V^{(i)}(x_{t+1}^{(i)},\tilde{x}_{t+1}^{(i)}) & - V^{(i)}(x_t^{(i)},\tilde{x}_t^{(i)}) \leq -\lambda_i V^{(i)}(x_t^{(i)},\tilde{x}_t^{(i)}) \nonumber \\
			& + \| w_t^{(i)} - \tilde{w}_t^{(i)} \|_{Q_i}^2 + \| y_t^{(i)} - \tilde{y}_t^{(i)} \|_{R_i}^2 \nonumber \\
			& + \sum_{j \in \mathcal{N}_i}^{} \gamma_{ij} V^{(j)}(x^{(j)}, \tilde{x}^{(j)})
			\label{eq:lyap_ass}
		\end{align}
	\end{subequations}
	\label{ass:lyap}
	for all times $t \in \mathbb{I}_{\geq 0}$, all initial conditions $x_0^{(i)}, \; \tilde{x}_0^{(i)} \in \mathbb{R}^{n_i}$, all disturbances $w^{(i)}, \; \tilde{w}^{(i)} \in \mathbb{R}^{q_i}$, and all ${x^{(j)}, \; \tilde{x}^{(j)} \in \mathbb{R}^{n_j}}$, where $x_{t+1}^{(i)} = f_i(x_t^{(i)}, u_t^{(i)}, w_t^{(i)}, z_t^{(i)})$, $\tilde{x}_{t+1}^{(i)} = f_i(\tilde{x}_t^{(i)}, u_t^{(i)}, \tilde{w}_t^{(i)}, \tilde{z}_t^{(i)})$, $y_t^{(i)} = h_i(x_t^{(i)}, u_t^{(i)}, w_t^{(i)}, z_t^{(i)})$, $\tilde{y}_t^{(i)} = h_i(\tilde{y}_t^{(i)}, u_t^{(i)}, \tilde{w}_t^{(i)}, \tilde{z}_t^{(i)})$, $t \in \mathbb{I}_{\geq 0}$.
\end{assumption}

In order to show that the overall system admits an exponential i-IOSS Lyapunov function, we follow the approach in \cite{dashkovskiy_11}, where an ISS Lyapunov function for the overall system is determined by using a small-gain condition. This is adapted to i-IOSS by considering the incremental version and taking the outputs into account. Before stating the theorem, we introduce the matrices  $\Lambda := \mathrm{diag}(\lambda_i, \cdot \cdot \cdot, \lambda_M)$ and $\Gamma := (\gamma_{ij})_{i,j=1, \cdot \cdot \cdot, M}$, where $\lambda_i$ and $\gamma_{ij}$ are the coefficients from Assumption \ref{ass:lyap}.

\begin{theorem}[exponential i-IOSS Lyapunov function]
	Let Assumption \ref{ass:lyap} hold for each subsystem (\ref{eq:sub}). Then, the overall system (\ref{eq:sys}) admits an exponential i-IOSS Lyapunov function if the small-gain condition $\rho(\Lambda^{-1}\Gamma) < 1$ is satisfied.
	\label{theo:lyap}
\end{theorem}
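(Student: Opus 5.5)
We construct the overall Lyapunov function as a weighted sum $V_\Sigma(x,\tilde{x}) := \sum_{i=1}^M \mu_i V^{(i)}(x^{(i)},\tilde{x}^{(i)})$ with weights $\mu_i > 0$ collected in a row vector $\mu^T = (\mu_1,\dots,\mu_M)$, in the spirit of \cite{dashkovskiy_11}. The quadratic bounds then follow directly from Assumption \ref{ass:lyap}. We take $P_{\Sigma 1} := \mathrm{diag}(\mu_1 P_1^{(1)},\dots,\mu_M P_1^{(M)})$ and $P_{\Sigma 2} := \mathrm{diag}(\mu_1 P_2^{(1)},\dots,\mu_M P_2^{(M)})$. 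Both are positive definite since each block is, and $\|x-\tilde{x}\|_{P_{\Sigma 1}}^2 = \sum_i \mu_i \|x^{(i)}-\tilde{x}^{(i)}\|_{P_1^{(i)}}^2$, and likewise for $P_{\Sigma 2}$.

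For the decrease condition, we multiply (\ref{eq:lyap_ass}) by $\mu_i$ and sum over $i$. In the overall system, $z_t^{(i)}$ is generated by the actual states $x_t^{(j)}$, $j\in\mathcal{N}_i$, so the coupling term is evaluated at time $t$. Writing $v_t := [V^{(i)}(x_t^{(i)},\tilde{x}_t^{(i)})]_{i=1}^M \geq 0$, we obtain
\begin{equation*}
V_\Sigma(x_{t+1},\tilde{x}_{t+1}) - V_\Sigma(x_t,\tilde{x}_t) \leq -\mu^T(\Lambda - \Gamma) v_t + \sum_{i} \mu_i \big( \|\Delta w_t^{(i)}\|_{Q_i}^2 + \|\Delta y_t^{(i)}\|_{R_i}^2 \big).
\end{equation*}
The disturbance and output terms are rewritten as $\|\Delta w_t\|_{Q_\Sigma}^2 + \|\Delta y_t\|_{R_\Sigma}^2$ with $Q_\Sigma := \mathrm{diag}(\mu_i Q_i)$ and $R_\Sigma := \mathrm{diag}(\mu_i R_i)$, both positive semidefinite. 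What remains is to find $\mu > 0$ and $\lambda_\Sigma \in (0,1)$ with $\mu^T(\Lambda-\Gamma) \geq \lambda_\Sigma \mu^T$ componentwise. Because $v_t \geq 0$, this gives $-\mu^T(\Lambda-\Gamma)v_t \leq -\lambda_\Sigma \mu^T v_t = -\lambda_\Sigma V_\Sigma$.

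The main step is choosing $\mu$ from the small-gain condition. Since $\Lambda^{-1}\Gamma$ is nonnegative with $\rho(\Lambda^{-1}\Gamma)<1$, its transpose has the same spectral radius. I would perturb to a strictly positive matrix $A_\varepsilon := \Lambda^{-1}\Gamma + \varepsilon \mathbf{1}\mathbf{1}^T$, with $\varepsilon>0$ small enough that $\rho(A_\varepsilon) =: r < 1$ by continuity of the spectral radius. Perron--Frobenius then yields a strictly positive left eigenvector $\nu^T A_\varepsilon = r \nu^T$, and hence $\nu^T \Lambda^{-1}\Gamma \leq r\nu^T$. An alternative is $\nu^T := \mathbf{1}^T(I-\Lambda^{-1}\Gamma)^{-1} = \sum_k \mathbf{1}^T(\Lambda^{-1}\Gamma)^k > 0$, which satisfies $\nu^T(I-\Lambda^{-1}\Gamma) = \mathbf{1}^T$. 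Setting $\mu^T := \nu^T\Lambda^{-1}$, which is positive since all $\lambda_i>0$, gives
\begin{equation*}
\mu^T(\Lambda-\Gamma) = \nu^T(I-\Lambda^{-1}\Gamma) \geq (1-r)\nu^T = (1-r)\mu^T\Lambda \geq (1-r)\min_i\lambda_i\, \mu^T .
\end{equation*}
Hence $\lambda_\Sigma := (1-r)\min_i \lambda_i \in (0,1)$ works. With the second choice of $\nu$, we instead use $\mathbf{1}^T \geq \nu^T/\max_i\nu_i$ to get a positive rate, which is less than one because $\min_i\lambda_i<1$.

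I expect the only real subtlety to be this vector construction, in particular ensuring strict positivity of $\mu$ when $\Gamma$ is reducible. That is why I would use the perturbation argument or the Neumann-series choice instead of citing Perron--Frobenius directly for $\Lambda^{-1}\Gamma$. One should also remark that the gain is $\rho(\Lambda^{-1}\Gamma)$ rather than $\rho(\Gamma)$ because of the factor $\Lambda^{-1}$ in this construction.
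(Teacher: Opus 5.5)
Your proof is correct and follows the paper's argument. It uses the same weighted sum $V_\Sigma = \mu^T [V^{(i)}]_{i=1}^M$, the same block-diagonal $P_{\Sigma 1}, P_{\Sigma 2}, Q_\Sigma, R_\Sigma$, and a decay rate taken from the componentwise inequality $\mu^T(\Lambda-\Gamma)>0$. The only difference is that you construct the strictly positive vector $\mu$ yourself, via the perturbed Perron--Frobenius or Neumann-series argument, which correctly handles reducible $\Gamma$. The paper instead cites \cite[Lemma 3.1]{dashkovskiy_11} and takes $\lambda_\Sigma = -\max_i H_i/\mu_i$ with $H = \mu^T(-\Lambda+\Gamma)$.
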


\begin{proof}
	By stacking all inequalities (\ref{eq:lyap_ass}) from index $1$ to $M$, we arrive at the compact form
	\begin{align}
		\big[ V^{(i)}(x_{t+1}^{(i)},&\tilde{x}_{t+1}^{(i)}) - V^{(i)}(x_t^{(i)},\tilde{x}_t^{(i)}) \big]_{i=1}^M \leq \nonumber \\
		& (-\Lambda+\Gamma) \big[ V^{(i)}(x_t^{(i)},\tilde{x}_t^{(i)}) \big]_{i=1}^M \nonumber \\
		& + \big[ \|\Delta w_t^{(i)}\|_{Q_i}^2 \big]_{i=1}^M + \big[ \|\Delta y_t^{(i)}\|_{R_i}^2 \big]_{i=1}^M.
		\label{eq:lyap_comp}
	\end{align}
	Now, assuming that the small-gain condition $\rho(\Lambda^{-1}\Gamma) < 1$ is satisfied, according to \cite[Lemma 3.1]{dashkovskiy_11} there exists a strictly positive vector $\mu \in \mathbb{R}^M$ with $\mu_i > 0$ for all $i \in \mathbb{I}_{[1,M]}$ such that $\mu^T(-\Lambda+\Gamma)<0$ (component-wise). Next, we define
	\begin{equation}
		\begin{split}
			V_{\Sigma}(x_t,\tilde{x}_t) := \mu^T \big[ V^{(i)}(x_t^{(i)},\tilde{x}_t^{(i)}) \big]_{i=1}^M
		\end{split}
		\label{eq:lyap_func}
	\end{equation}
	as exponential i-IOSS Lyapunov function for the overall system. Note that $V_{\Sigma}(x_t,\tilde{x}_t)$ can be bounded by
	\begin{equation}
		\|x_t - \tilde{x}_t\|_{P_1}^2 \leq V_{\Sigma}(x_t,\tilde{x}_t) \leq \|x_t - \tilde{x}_t\|_{P_2}^2
	\end{equation}
	with $P_1 = \mathrm{diag} \big( \mu_1 P_1^{(1)}, \cdot \cdot \cdot, \mu_M P_1^{(M)} \big)$ and $P_2 = \mathrm{diag} \big( \mu_1 P_2^{(1)}, \cdot \cdot \cdot, \mu_M P_2^{(M)} \big)$. From (\ref{eq:lyap_func}) and (\ref{eq:lyap_comp}), we obtain
	\begin{align}
		V_{\Sigma}&(x_{t+1},\tilde{x}_{t+1}) - V_{\Sigma}(x_t,\tilde{x}_t) = \nonumber \\
		& \mu^T \big[ V^{(i)}(x_{t+1}^{(i)},\tilde{x}_{t+1}^{(i)}) -  V^{(i)}(x_t^{(i)}, \tilde{x}_t^{(i)}) \big]_{i=1}^M \leq \nonumber \\
		& \mu^T (-\Lambda+\Gamma) \big[ V^{(i)}(x_t^{(i)}, \tilde{x}_t^{(i)}) \big]_{i=1}^M \nonumber \\
		& + \mu^T \big[\|\Delta w_t^{(i)}\|_{Q_i}^2\big]_{i=1}^M + \mu^T \big[\|\Delta y_t^{(i)}\|_{R_i}^2\big]_{i=1}^M.
	\end{align}
	Next, we define $H := \mu^T(-\Lambda+\Gamma)$. Note that $H$ is a vector of dimension $M$ with strictly negative entries, i.e., $H_i < 0$ for all $i \in \mathbb{I}_{[1,M]}$.
	Defining further
	\begin{align}
		\lambda_\Sigma & := - \max\limits_{i \in \mathbb{I}_{[1,M]}} \frac{H_i}{\mu_i} \nonumber \\
		& = - \max\limits_{i \in \mathbb{I}_{[1,M]}} \bigg( \sum_{j \in \mathcal{N}_i}^{} \frac{\mu_j}{\mu_i} \gamma_{ij} - \lambda_i \bigg) \in (0,1),
	\end{align} we can write
	\begin{align}
		V_{\Sigma}(x_{t+1},\tilde{x}_{t+1}) & - V_{\Sigma}(x_t,\tilde{x}_t) \leq -\lambda_\Sigma V_{\Sigma}(x_t,\tilde{x}_t) \nonumber \\
		& + \|\Delta w_t\|_{Q_\Sigma} + \|\Delta y_t\|_{R_\Sigma}
		\label{eq:lyap_final}
	\end{align}
	with $Q_\Sigma = \mathrm{diag}(\mu_1 Q_1, \cdot \cdot \cdot, \mu_M Q_M) \succeq 0$ and $R_\Sigma = \mathrm{diag}(\mu_1 R_1, \cdot \cdot \cdot, \mu_M R_M) \succeq 0$. Inequality (\ref{eq:lyap_final}) exhibits the same structure as (\ref{eq:lyap_def}), which shows that $V_{\Sigma}(x,\tilde{x})$ is an exponential i-IOSS Lyapunov function for the overall system (\ref{eq:sys}) according to Definition \ref{def:lyap}, which completes the proof.
\end{proof}

\subsection{Verification}
\label{sec:verif}

In this section, we derive LMI conditions to verify satisfaction of Assumptions \ref{ass:IOSS} and \ref{ass:lyap}. In doing so, we extend the centralized verification in \cite{schiller_23} to a distributed one by considering the couplings between the subsystems. For brevity, we omit the index $i$ in this section, since we do not consider the overall system, but only the single subsystems for all $i \in \mathbb{I}_{[1,M]}$. Moreover, we use $\nu, \; \omega, \; \psi$ instead of $u^{(i)}, \; w^{(i)}, \; y^{(i)}$, respectively, to distinguish subsystem variables from those of the overall system. However, for the subsystem state $x^{(i)}$ we retain this notation to separate it from the coupling states $x^{(j)}$.

\newtheorem{proposition}{Proposition}
\begin{proposition}
	Consider the subsystems (\ref{eq:sub}) with $h_i \in \mathbb{R}^{p_i}$ being affine in $x^{(i)} \in \mathbb{R}^{n_i}$ for all $i \in \mathbb{I}_{[1,M]}$. Let
	\begin{align}
		& \begin{pmatrix}
			A^T \tilde{P} A - \tilde{\eta} \tilde{P} & A^T \tilde{P} B & A^T \tilde{P} E \\
			B^T \tilde{P} A & B^T \tilde{P} B - \tilde{Q} & B^T \tilde{P} E \\
			E^T \tilde{P} A & E^T \tilde{P} B & E^T \tilde{P} E - \tilde{G}
		\end{pmatrix}
		\nonumber \\
		& -
		\begin{pmatrix}
			C^T \tilde{R} C & C^T \tilde{R} D & C^T \tilde{R} F \\
			D^T \tilde{R} C & D^T \tilde{R} D & D^T \tilde{R} F \\
			F^T \tilde{R} C & F^T \tilde{R} D & F^T \tilde{R} F
		\end{pmatrix} \preceq 0
		\label{eq:lmi}
	\end{align}
	be satisfied for all $x^{(i)} \in \mathbb{R}^{n_i}$, $\nu \in \mathbb{R}^{m_i},$ $\omega \in \mathbb{R}^{q_i}$, and $z^{(i)} \in \mathbb{R}^{s_i}$ with $\tilde{\eta} \in (0,1), \, \tilde{P} \succ 0, \; \tilde{Q}, \tilde{R}, \tilde{G} \succeq 0$, where the matrices $A$, $B$, $C$, $D$, $E$, and $F$ are given by
	\begin{align}
		A &= \frac{\partial f_i}{\partial x^{(i)}}(x^{(i)},\nu,\omega,z^{(i)}), \; B = \frac{\partial f_i}{\partial \omega}(x^{(i)},\nu,\omega,z^{(i)}), \nonumber \\
		C &= \frac{\partial h_i}{\partial x^{(i)}}x^{(i)},\nu,\omega,z^{(i)}), \; D = \frac{\partial h_i}{\partial \omega}(x^{(i)},\nu,\omega,z^{(i)}), \nonumber \\
		E &= \frac{\partial f_i}{\partial z^{(i)}}(x^{(i)},\nu,\omega,z^{(i)}), \; F = \frac{\partial h_i}{\partial z}(x^{(i)},\nu,\omega,z^{(i)}).
		\label{eq:lmi_deriv}
	\end{align}
	Then, Assumptions \ref{ass:IOSS} and \ref{ass:lyap} hold.
	\label{prop:verif}
\end{proposition}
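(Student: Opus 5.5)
The strategy is the standard one from the centralized setting of \cite{schiller_23}: the LMI (\ref{eq:lmi}) is a pointwise (differential) dissipation inequality, and integrating it along the straight line between two points gives an incremental dissipation inequality for the quadratic function $V^{(i)}(x^{(i)},\tilde{x}^{(i)}) := \|x^{(i)}-\tilde{x}^{(i)}\|_{\tilde{P}}^2$. The coupling state $z^{(i)}$ is handled exactly like an additional disturbance, which produces the extra blocks $E$, $F$ and the weight $\tilde{G}$.

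Concretely, for two tuples $\chi=(x^{(i)},\nu,\omega,z^{(i)})$ and $\tilde{\chi}=(\tilde{x}^{(i)},\nu,\tilde{\omega},\tilde{z}^{(i)})$ with the same $\nu$, define $\chi(s)=\tilde{\chi}+s(\chi-\tilde{\chi})$ for $s\in[0,1]$. Set $\delta=(\Delta x^{(i)},\Delta\omega,\Delta z^{(i)})$ and $\Phi(s):=[A\;B\;E](s)\,\delta$, $\Psi(s):=[C\;D\;F](s)\,\delta$, so that $\Delta x^{(i)}_{+}=\int_0^1\Phi(s)\,ds$ and $\Delta\psi=\int_0^1\Psi(s)\,ds$ by the mean value theorem in integral form. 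Multiplying (\ref{eq:lmi}) from both sides by $\delta$ gives, for every $s$,
\begin{equation*}
\|\Phi(s)\|_{\tilde P}^2\le \tilde\eta\|\Delta x^{(i)}\|_{\tilde P}^2+\|\Delta\omega\|_{\tilde Q}^2+\|\Delta z^{(i)}\|_{\tilde G}^2+\|\Psi(s)\|_{\tilde R}^2 .
\end{equation*}
Since $h_i$ is affine in $x^{(i)}$, $C$ is constant along the path. The intended use of this hypothesis (following \cite{schiller_23}) is to argue that $\Psi(s)$ can be replaced by $\Delta\psi$; strictly, this requires $D$ and $F$ to be constant too, i.e.\ $h_i$ affine jointly in $(x^{(i)},\omega,z^{(i)})$. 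The alternative is to keep $\int_0^1\|\Psi(s)\|_{\tilde R}^2ds$ and bound it by $\|\Delta\psi\|_{\tilde R}^2$, but that inequality goes the wrong way under Jensen. I therefore expect the main obstacle to be matching the precise affinity hypothesis so that $\Psi(s)\equiv\Delta\psi$. I would state explicitly that the affine structure is meant in all arguments (or that $D,F$ are constant), as in \cite{schiller_23}.

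With $\Psi(s)\equiv\Delta\psi$, the next step is to integrate over $s$ and apply Jensen's inequality to the convex map $v\mapsto\|v\|_{\tilde P}^2$, which yields $\|\Delta x_{+}^{(i)}\|_{\tilde P}^2\le\int_0^1\|\Phi(s)\|_{\tilde P}^2ds$. Hence
\begin{equation*}
V^{(i)}_{+}-V^{(i)}\le-(1-\tilde\eta)V^{(i)}+\|\Delta\omega\|_{\tilde Q}^2+\|\Delta\psi\|_{\tilde R}^2+\|\Delta z^{(i)}\|_{\tilde G}^2 .
\end{equation*}
Since $z^{(i)}$ stacks the $x^{(j)}$ with $j\in\mathcal N_i$, we have $\|\Delta z^{(i)}\|_{\tilde G}^2\le\|\tilde G\|\sum_{j}\|\Delta x^{(j)}\|^2\le\sum_j\gamma_{ij}V^{(j)}$ with $\gamma_{ij}=\|\tilde G\|/\lambda_{\min}(\tilde P^{(j)})$, where $\tilde P^{(j)}$ is the matrix from the LMI of subsystem $j$. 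This establishes Assumption~\ref{ass:lyap} with $\lambda_i=1-\tilde\eta$ and $P_1^{(i)}=P_2^{(i)}=\tilde P$.

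Finally, Assumption~\ref{ass:IOSS} follows from the Lyapunov inequality. Writing $V^{(i)}_{t+1}\le\tilde\eta V^{(i)}_t+(\cdot)$, iterate to get $V^{(i)}_t\le\tilde\eta^tV^{(i)}_0+\sum_k\tilde\eta^{t-k-1}(\ldots)$, where the coupling term is now written directly as $\|\tilde G\|\,\|\Delta x_k^{(j)}\|^2$. Then take square roots using $\sqrt{a+b}\le\sqrt a+\sqrt b$ and the bound $\big(\sum_k\tilde\eta^{t-k-1}a_k^2\big)^{1/2}\le\sum_k\sqrt{\tilde\eta}^{\,t-k-1}a_k$. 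Together with the eigenvalue bounds of $\tilde P$, this gives (\ref{eq:IOSS}) with $\eta_i=\sqrt{\tilde\eta}$, $p_i=\sqrt{\lambda_{\max}(\tilde P)/\lambda_{\min}(\tilde P)}$, $q_i=\|\tilde Q\|^{1/2}/\lambda_{\min}(\tilde P)^{1/2}$, and $r_i,g_{ij}$ analogous. If some $g_{ij}$ come out zero, they can be replaced by any positive number to meet the strict positivity required in Assumption~\ref{ass:IOSS}.
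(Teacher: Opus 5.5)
Your argument follows the paper's proof. The paper obtains the incremental inequality (\ref{eq:verif}) by invoking \cite[Corollary 3]{schiller_23}, which is exactly your segment/mean-value/Jensen argument with $z^{(i)}$ treated as an extra input. It then bounds $\|\Delta z^{(i)}\|_{\tilde G}^2$ via $\gamma_{ij}$ resp.\ $\tilde g_{ij}$, sets $\lambda_i=1-\tilde\eta$, and iterates with $\eta_i=\sqrt{\tilde\eta}$ and $g_{ij}=\sqrt{\tilde g_{ij}/\lambda_{\min}(\tilde P)}$. Your use of $\|\tilde G\|$ is the conservative choice $\tilde g_{ij}=\lambda_{\max}(\tilde G)$ from the remarks following the proposition. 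The paper allows sharper per-neighbour gains, but that matters only for the later small-gain test. Your $\tilde P^{(j)}$ in $\gamma_{ij}$ is the correct reading of (\ref{eq:z_to_x_2}).

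The obstacle you flag is genuine. It is a defect of the statement, not of your route: the paper's ``straightforward extension'' tacitly needs $[C\;D\;F]$ to be constant along segments, and with $h_i$ affine only in $x^{(i)}$ the proposition is false. Take a scalar subsystem in which $z^{(i)}$ does not enter, with $f_i=\omega^2$ and $h_i=(\cos\omega^2,\sin\omega^2)^T$. Then $A=C=E=F=0$, $B=2\omega$, and $D=2\omega(-\sin\omega^2,\cos\omega^2)^T$. With $\tilde P=1$, $\tilde R=I_2$, $\tilde Q=0$, $\tilde G=0$, the matrix in (\ref{eq:lmi}) equals $\mathrm{diag}(-\tilde\eta,0,0)\preceq 0$ everywhere. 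Now choose $x_0^{(i)}=\tilde x_0^{(i)}$, $\omega_0=\sqrt{a+2\pi}$, $\tilde\omega_0=\sqrt{a}$. Then $\|\Delta x_1^{(i)}\|=2\pi$ and $\Delta\psi_0=0$, while $|\Delta\omega_0|\to 0$ as $a\to\infty$. Hence neither (\ref{eq:IOSS}) nor (\ref{eq:lyap_ass}) can hold for the first time step, with any constants.

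So your strengthened hypothesis is actually needed: $h_i$ affine jointly in $(x^{(i)},\omega,z^{(i)})$ for fixed $\nu$, so that $C$, $D$, $F$ are all constant along the segment, as in the example of Section~\ref{sec:example}. Under it your proof is complete. The only loose end is positivity: when $\tilde G$, $\tilde Q$ or $\tilde R$ vanish, $\gamma_{ij}$, $q_i$, $r_i$ must also be replaced by positive numbers, as you already do for $g_{ij}$.
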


\begin{proof}
	By a straightforward extension of the proof of \cite[Corollary 3]{schiller_23}, it can be shown that satisfaction of (\ref{eq:lmi}) implies that the inequality
	\begin{align}
		\|x_{t+1}^{(i)}-\tilde{x}_{t+1}^{(i)}\|_{\tilde{P}}^2 & \leq \tilde{\eta} \|x_t^{(i)}-\tilde{x}_t^{(i)}\|_{\tilde{P}}^2 + \|\omega_t-\tilde{\omega}_t\|_{\tilde{Q}}^2 \nonumber \\
		& + \|\psi_t-\tilde{\psi}_t\|_{\tilde{R}}^2 + \|z_t^{(i)}-\tilde{z}_t^{(i)}\|_{\tilde{G}}^2
		\label{eq:verif}
	\end{align}
	holds for all times $t \in \mathbb{I}_{\geq 0}$, all initial conditions $x_0^{(i)}, \; \tilde{x}_0^{(i)} \in \mathbb{R}^{n_i}$, all $z_t^{(i)}, \; \tilde{z}_t^{(i)} \in \mathbb{R}^{s_i}$, and all disturbances $\omega_t, \;\tilde{\omega}_t \in \mathbb{R}^{q_i}$, where $x_{t+1}^{(i)} = f_i(x_t^{(i)}, \nu_t, \omega_t, z_t^{(i)})$, $\tilde{x}_{t+1}^{(i)} = f_i(\tilde{x}_t^{(i)}, \nu_t, \tilde{\omega}_t, \tilde{z}_t^{(i)})$, $\psi_t = h_{i}(x_t^{(i)}, \nu_t, \omega_t,z_t^{(i)})$, $\tilde{\psi}_t = h_{i}(\tilde{x}_t^{(i)}, \nu_t, \tilde{\omega}_t, \tilde{z}_t^{(i)})$, $t \in \mathbb{I}_{\geq 0}$. \\
	In order to show that Assumption \ref{ass:lyap} is satisfied, we choose $\gamma_{ij} > 0$ for all $j \in \mathcal{N}_i$ such that
		\begin{equation}
			\|z^{(i)}-\tilde{z}^{(i)}\|_{\tilde{G}}^2 \leq \sum_{j \in \mathcal{N}_i}^{} \gamma_{ij} \|x^{(j)}-\tilde{x}^{(j)}\|_{\tilde{P}}^2
			\label{eq:z_to_x_2}
		\end{equation}
		holds for all $z^{(i)}, \tilde{z}^{(i)} \in \mathbb{R}^{s_i}$ and all $x^{(j)}, \tilde{x}^{(j)} \in \mathbb{R}^{n_j}$. Note that the existence of such $\gamma_{ij}$ is given due to the fact that $G \succeq 0$ and $P \succ 0$.
		Moreover, we set $\lambda = 1-\tilde{\eta}$ and use $V(x^{(i)},\tilde{x}^{(i)}) = \|x^{(i)} - \tilde{x}^{(i)}\|_{\tilde{P}}^2$ as exponential i-IOSS Lyapunov function to obtain the same structure of (\ref{eq:lyap_ass}). \\
		For Assumption \ref{ass:IOSS}, we choose similar to (\ref{eq:z_to_x_2}) suitable $\tilde{g}_{ij} > 0$ for all $j \in \mathcal{N}_i$ such that
		\begin{equation}
			\|z^{(i)}-\tilde{z}^{(i)}\|_{\tilde{G}}^2 \leq \sum_{j \in \mathcal{N}_i}^{} \tilde{g}_{ij} \|x^{(j)}-\tilde{x}^{(j)}\|^2
			\label{eq:z_to_x_1}
		\end{equation}
		is satisfied for all $z^{(i)}, \tilde{z}^{(i)} \in \mathbb{R}^{s_i}$ and all $x^{(j)}, \tilde{x}^{(j)} \in \mathbb{R}^{n_j}$.
		Then, inserting (\ref{eq:z_to_x_1}) in (\ref{eq:verif}) and applying (\ref{eq:verif}) repeatedly, we can write
		\begin{align}
			\|x_{t}^{(i)}-\tilde{x}_{t}^{(i)}\| & \leq \frac{1}{\sqrt{\lambda_{\min}(\tilde{P})}} \bigg( \eta^t \sqrt{\lambda_{\max}(\tilde{P})} \|x_0^{(i)}-\tilde{x}_0^{(i)}\| \nonumber \\
			& + \sum_{k=0}^{t-1} \eta^{t-k-1} \bigg( \sqrt{\lambda_{\max}(\tilde{Q})} \|\omega_k-\tilde{\omega}_k\| \nonumber \\
			& + \sqrt{\lambda_{\max}(\tilde{R})} \|\psi_k-\tilde{\psi}_k\| \nonumber \\
			& + \sum_{j \in \mathcal{N}_i}^{}\sqrt{\tilde{g}_{ij}} \|x_k^{(j)}-\tilde{x}_k^{(j)}\| \bigg) \bigg)
			\label{eq:verif_final}
		\end{align}
		with $\eta = \sqrt{\tilde{\eta}}$, which corresponds to (\ref{eq:IOSS}). The coupling gains in (\ref{eq:IOSS}) can then be computed by
		\begin{equation}
			g_{ij} = \sqrt{\frac{\tilde{g}_{ij}}{\lambda_{\min}(\tilde{P})}}.
			\label{eq:g_ij_final}
		\end{equation}
\end{proof}

Some remarks are in order.

\begin{remark}
	\label{rem:numeric}
	For a fixed $\tilde{\eta} \in (0,1)$, (\ref{eq:lmi}) results in an infinite set of LMIs with decision variables $\tilde{P} \succ 0, \; \tilde{Q}, \tilde{R}, \tilde{G} \succeq 0$, which can be efficiently verified using sum-of-squares tools or linear parameter-varying embeddings. In practice, we usually consider compact sets on the system states, inputs and disturbances, resulting in a finite set of LMIs that can be verified via gridding and semidefinite programming (SDP). Similarly, we can determine the optimal, i.e., smallest, variables $\gamma_{ij}$ in (\ref{eq:z_to_x_2}) and $\tilde{g}_{ij}$ in (\ref{eq:z_to_x_1}) via SDP.
\end{remark}

\begin{remark}
	A conservative choice of the gains in (\ref{eq:z_to_x_2}) and (\ref{eq:z_to_x_1}) is $\gamma_{ij} = \lambda_{\max}(\tilde{G}) \lambda_{\min}(\tilde{P})^{-1}$ and $\tilde{g}_{ij} = \lambda_{\max}(\tilde{G})$, respectively, for all $j \in \mathcal{N}_i$, $i \in \mathbb{I}_{[1,M]}$,
	which leads to the relation $g_{ij} = \sqrt{\gamma_{ij}}$ in (\ref{eq:g_ij_final}).
	Hence, we conclude that $\gamma_{ij} < g_{ij}$ if $\lambda_{\max}(\tilde{G}) \lambda_{\min}(\tilde{P})^{-1} < 1$. In order to satisfy the small-gain condition, small values of $\gamma_{ij}$ and $g_{ij}$ are desirable. Therefore, the effort during the verification is to keep the expression $\lambda_{\max}(\tilde{G}) \lambda_{\min}(\tilde{P})^{-1}$ as small as possible. Moreover, the gains $\gamma_{ij}$ and $\tilde{g}_{ij}$ are scaled with $(1-\tilde{\eta})^{-1}$ and ${(1-\sqrt{\tilde{\eta}})^{-1}}$, respectively. Since $\tilde{\eta} \in (0,1)$, the gains of the Lyapunov-based formulation are subject to a smaller scaling. Overall, these insights indicate a less conservative small-gain condition in Lyapunov coordinates, which is also observed in the numerical example in Section~\ref{sec:example}.
\end{remark}

\begin{remark}
	Direct verification of system-theoretic properties using LMIs suffers from the curse of dimensionality and rapidly becomes intractable. We propose a distributed verification approach in which both the number and the dimension of the LMIs to be solved depend only on the individual subsystems. If the subsystems are drawn from a finite set of different dynamics, this framework enables the verification of detectability for infinite-dimensional networks as $M \rightarrow \infty$ by solving only a finite number of LMIs, which will be demonstrated in Section~\ref{sec:example}.
\end{remark}

\section{EXAMPLE}
\label{sec:example}

To illustrate our results, the verification of a distributed system is carried out in this chapter.
In particular, a train with $M$ carriages is modeled as a mass-spring-damper system (cf. \cite{uyulan_20}). Here, we consider nonlinear damping with a cubic term, where each carriage is a subsystem, linked to the forward and rear carriage. Therefore, every subsystem has two couplings except for subsystems $i=1$ and $i=M$, i.e., the first and last carriage of the train, which exhibit one coupling. The dynamics for the position $x_1^{(i)}$ of each carriage is described by

\begin{equation}
	\begin{split}
		x_{1,t+1}^{(i)} = x_{1,t}^{(i)} + \delta x_{2,t}^{(i)} + w_{1,t}^{(i)} \quad \forall i \in \mathbb{I}_{[1,M]},
	\end{split}
\end{equation}

while the velocity $x_2^{(i)}$ is modeled as

\begin{subequations}
	\begin{align}
		x_{2,t+1}^{(1)} & = x_{2,t}^{(1)} + \frac{\delta}{m} \big(F_t +   k(x_{1,t}^{(2)}-x_{1,t}^{(1)}) \nonumber \\
		& + d(x_{2,t}^{(2)}-x_{2,t}^{(1)})^3 \big) + w_{2,t}^{(1)},
	\end{align}
	\begin{align}
		x_{2,t+1}^{(i)} & = x_{2,t}^{(i)} + \frac{\delta}{m} \big(  k(x_{1,t}^{(i+1)}-x_{1,t}^{(i)}) \nonumber \\
		& + d(x_{2,t}^{(i+1)}-x_{2,t}^{(i)})^3 +  k(x_{1,t}^{(i-1)}-x_{1,t}^{(i)}) \nonumber \\
		& + d(x_{2,t}^{(i-1)}-x_{2,t}^{(i)})^3 \big) + w_{2,t}^{(i)} \quad \forall i \in \mathbb{I}_{[2,M-1]},
	\end{align}
	\begin{align}
		x_{2,t+1}^{(M)} & = x_{2,t}^{(M)} + \frac{\delta}{m} \big(k(x_{1,t}^{(M-1)}-x_{1,t}^{(M)}) \nonumber \\
		& + d(x_{2,t}^{(M-1)}-x_{2,t}^{(M)})^3 \big)+ w_{2,t}^{(M)},
	\end{align}
\end{subequations}

with input $F_t \in \mathbb{R}$ and positive constants $\delta, \; m, \; k, \; d > 0$. The output equation is given by

\begin{equation}
	y_t^{(i)} = x_{1,t}^{(i)} + w_{3,t}^{(i)} \quad \forall i \in \mathbb{I}_{[1,M]},
\end{equation}

i.e., only the position of each carriage is measured. Here, we consider additive process and measurement noise ${w_t^{(i)} \in \mathbb{R}^3}$. Since the system dynamics of all carriages in the middle, i.e., $\forall i \in \mathbb{I}_{[2,M-1]}$, are equal, the verification of (\ref{eq:verif}) has to be performed only once for these subsystems. As for the first and the last subsystem, the same LMI condition (\ref{eq:lmi}) is derived, since the input $F_t$ vanishes when taking the derivatives (\ref{eq:lmi_deriv}). Therefore, only two verifications have to be performed for the overall system. Moreover, for the given example, the matrices $A$ and $E$ from (\ref{eq:lmi}) depend solely on the velocities, since the positions are linear in the system dynamics. As discussed in Remark \ref{rem:numeric}, the LMIs (\ref{eq:lmi}) need to be satisfied only on the considered compact set, i.e., for all physically attainable velocities of the train. Deriving the coupling gains $g_{ij}$ of (\ref{eq:IOSS}) and $\gamma_{ij}$ of (\ref{eq:lyap_ass}) by following the procedure in Section \ref{sec:verif} and setting up the gain matrices $G$ from (\ref{eq:stacked_sup})  and $A^{-1} \Gamma$ from (\ref{eq:lyap_comp}) as described in Sections \ref{sec:IOSS} and \ref{sec:lyap}, respectively, we can evaluate the small-gain condition, which is illustrated in Table \ref{tab:sgc}.
As indicated in Section \ref{sec:verif}, the small-gain condition in Lyapunov coordinates is less conservative, since for $M=4$ the condition for the trajectory-based exponential i-IOSS formulation is already violated, i.e., $\rho(G) \geq 1$. Moreover, in the given example, an arbitrary number of carriages can be appended, while the small-gain condition remains satisfied (in Lyapunov coordinates), as it is bounded by the maximum row sum of the gain matrices. Therefore, we can conclude that the overall system is exponential i-IOSS independent of the number of subsystems $M$.

\begin{table}
	\centering
	\vspace{0.5em}
	\caption{Comparison of the small-gain condition depending on the number of subsystems}
	\begin{tabular}{|c|c|c|}
		\hline
		$\mathbf{M}$ &\textbf{Expon. i-IOSS (\ref{sec:IOSS})} & \textbf{Lyapunov char. (\ref{sec:lyap})} \\
		\hline
		3 & \cmark  &  \cmark \\
		\hline
		4 & \xmark  & \cmark  \\
		\hline
		$\infty$ & \xmark  & \cmark  \\
		\hline
	\end{tabular}
	\label{tab:sgc}
\end{table}

\section{CONCLUSION}
\label{sec:conclusion}

In this work, we have analyzed the exponential i-IOSS property of large-scale nonlinear systems. In particular, the distributed nature of these systems was exploited by imposing local assumptions to conclude together with a small-gain condition detectability of the overall system. In doing so, a global gain matrix was constructed, which describes the mutual dependencies between the subsystems and to which the small-gain condition was applied.
Furthermore, a distributed analysis in Lyapunov coordinates is provided, resulting in a quantitatively different small-gain condition that was found to be less conservative. Moreover, this work provides a distributed verification method for exponential i-IOSS of large-scale nonlinear systems. In particular, the size of the LMIs to be solved depends on the individual subsystem dimensions, and the number of LMIs depends on the number of different subsystem dynamics. In particular, as illustrated in the example, if only a finite number of different subsystem dynamics exist, this allows for a distributed verification of i-IOSS even if the number of subsystems is potentially infinite.



\bibliographystyle{IEEEtran}
\bibliography{reference}

\end{document}